\let\oldr@@t\r@@t
\def\r@@t#1#2{%
	\setbox0=\hbox{$\oldr@@t#1{#2\,}$}\dimen0=\ht0
	\advance\dimen0-0.2\ht0
	\setbox2=\hbox{\vrule height\ht0 depth -\dimen0}%
	{\box0\lower0.4pt\box2}}
\LetLtxMacro{\oldsqrt}{\sqrt}
\renewcommand*{\sqrt}[2][\ ]{\oldsqrt[#1]{#2}}
\theoremstyle{definition}
\newtheorem{theorem}{Theorem}[section]
\newtheorem{corollary}[theorem]{Corollary}
\newtheorem{example}{Example}[section]
\newtheorem{proposition}[theorem]{Proposition}
\newtheorem{remark}[theorem]{Remark}
\numberwithin{equation}{section} 
\def\@seccntformat#1{\@ifundefined{#1@cntformat}%
	{\csname the#1\endcsname\quad}
	{\csname #1@cntformat\endcsname}
}
\newif\ifShowComments
\def\strutdepth{\dp\strutbox}
\def\druk#1{\strut\vadjust{\kern-\strutdepth
        {\vtop to \strutdepth{%
                \baselineskip\strutdepth\vss
                        \llap{\hbox{#1}\quad}\null}}}}
\title{\bf
A closed-form expression for the variance of truncated distribution and its uses
%
}
\author[1, 2]{\hspace{-0.07cm}Roberto Vila \thanks{rovig161@gmail.com}}
\author[2]{\hspace{-0.09cm}Narayanaswamy Balakrishnan  \thanks{bala@mcmaster.ca}}
\author[1]{\hspace{-0.09cm}Raul Matsushita \thanks{raulmta@unb.br}}
\affil[1]{Department of Statistics, University of
	Bras\'ilia, Bras\'ilia, Brazil}
\affil[2]{
	Department of Mathematics and Statistics, McMaster University, Hamilton, Ontario, Canada}
\begin{document}
\maketitle
\begin{abstract}
This work sheds some light on the relationship between a distribution's standard deviation and its range, a topic that has been discussed extensively in the literature. While many previous studies have proposed inequalities or relationships that depend on the shape of the population distribution, the approach here is built on a family of bounded probability distributions based on skewing functions. We offer closed-form expressions for its moments and the asymptotic behavior as the support's semi-range tends to zero and $\infty$. We also establish an inequality in which the well-known Popoviciu's one is a special case. Finally, we provide an example using US dollar prices in four different currencies traded on foreign exchange markets to illustrate the results developed here.
\end{abstract}
\smallskip
\noindent
{\small {\bfseries Keywords.} {Truncated distribution $\cdot$ Skewing function $\cdot$ Popoviciu's inequality $\cdot$ Skew-symmetric distribution  $\cdot$ Econophysics.}}
\\
{\small{\bfseries Mathematics Subject Classification (2010).} {MSC 60E05 $\cdot$ MSC 62Exx $\cdot$ MSC 62Fxx.}}


\section{Introduction}

Relating the standard deviation ($\sigma$) to the range is a well-studied topic \citep[\emph{see}][]{Tippett, Popoviciu1935, Shone, David}. However, most of the results found in the literature in this regard propose inequalities or relationships that depend on the shape of the population distribution. \cite{Matsushita2020, Matsushita2023} recently suggested a power law between $\sigma$ and the semi-range $\ell$ without knowing the population distribution, but assuming  symmetric truncated forms restricted to $\ell \ll 1$. They argued that truncation is a phenomenon naturally generated by the sampling process. In their approach, conditional distribution properties can link the usual unbounded distribution for describing unobserved data.

Here, we establish a closed-form expression for the variance of truncated distributions valid for all $\ell > 0$. We start with the general case (Section \ref{sec:2.1}), by considering a truncated variable $X$ over the interval $[a,b]$ based on a cumulative distribution function $G$ with unbounded support as a skewing function. For $c \in (a,b)$ and $p>0$, we present a general form for $\mathbb{E}[(X-c)^p]$ as well as its asymptotic behavior as the support's semi-range tends to zero and $\infty$. We also present some inequalities that included Popoviciu's inequality as a special case. 

Section \ref{sec:2.2} presents some properties and examples regarding the symmetrically truncated distribution as a particular case. Importantly, we deduce the form of the ratio $\sigma/\ell$ as a function of $G$ and $\ell$. For illustrative purposes, we have presented an example using actual financial data (Section \ref{sec:3}). They consist of 16 million tick-by-tick returns of four currencies against the US dollar transacted on foreign exchange markets. Finally, Section \ref{sec:4} makes some brief concluding remarks.
 
\section{Main results}

\subsection{General case}
\label{sec:2.1}
Suppose we have a random variable with cumulative distribution function (CDF) $G(x)$ and with infinite support. Based on $G$, given two real numbers $a$ and $b$, such that $a<b$, we have
\begin{align}\label{def-F-p}
F_X(x)
=
{G(x)-G(a)\over G(b)-G(a)}, \quad a<x<b,
\end{align}
to be the truncated CDF of a random variable $X$ with support $(a,b)$. Then, we have the following result concerning its moments.

\begin{theorem}\label{lemma-initial}
Let $c$ and $p$ be real numbers such that $a<c<b$ and $p>0$.
Then, the $p$-th moment about $c$ is given by
\begin{align*}
\mathbb{E}[(X-c)^p]
=
\dfrac{1}{G(b)-G(a)}\,
	\big[
	(b-c)^p G(b)
	-
	(a-c)^p G(a)
-
p I_G(c;a-c,b-c,p)
\big],
\end{align*}
where $I_G(c;s,t,p)$ is defined as
\begin{align*}
	I_G(c; s,t,p)=\int_{s}^{t} y^{p-1} G(y+c)\,  {\rm d}y, \quad s<t.
\end{align*}
\end{theorem}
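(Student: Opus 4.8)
The plan is to reduce the claim to a single integration by parts after an affine change of variable. First I would write, using the definition \eqref{def-F-p} of $F_X$,
\[
\mathbb{E}[(X-c)^p]=\int_a^b (x-c)^p\,{\rm d}F_X(x)=\frac{1}{G(b)-G(a)}\int_a^b (x-c)^p\,{\rm d}G(x),
\]
where the integral is read in the Lebesgue--Stieltjes sense against the (sub-probability) measure induced by $G$ on $(a,b)$; equivalently, if $G$ has a density $g$, then $F_X$ has density $g/(G(b)-G(a))$ and this is an ordinary integral. The substitution $y=x-c$ maps $(a,b)$ onto $(a-c,b-c)$ and turns the right-hand side into $\tfrac{1}{G(b)-G(a)}\int_{a-c}^{b-c} y^p\,{\rm d}G(y+c)$.

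Next I would integrate by parts, taking $u(y)=y^p$ and $v(y)=G(y+c)$, so that ${\rm d}u=p\,y^{p-1}\,{\rm d}y$ and ${\rm d}v={\rm d}G(y+c)$. This gives
\[
\int_{a-c}^{b-c} y^p\,{\rm d}G(y+c)=\Big[\,y^p\,G(y+c)\,\Big]_{y=a-c}^{\,y=b-c}-p\int_{a-c}^{b-c} y^{p-1}\,G(y+c)\,{\rm d}y.
\]
Evaluating the boundary term at $y=b-c$ and $y=a-c$ yields $(b-c)^p G(b)-(a-c)^p G(a)$, while the remaining integral is precisely $I_G(c;a-c,b-c,p)$ by definition. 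Dividing through by $G(b)-G(a)$ produces the asserted formula.

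The content here is genuinely light, so the "main obstacle" is only bookkeeping: (i) justifying the integration-by-parts step when $G$ is merely a CDF rather than absolutely continuous — covered by the general Lebesgue--Stieltjes integration-by-parts theorem since $y\mapsto y^p$ has bounded variation on the compact interval $[a-c,b-c]$, and the boundary values $G((b-c)+c)=G(b)$, $G((a-c)+c)=G(a)$ are the ones appearing provided $G$ is continuous at $a$ and $b$ (the natural assumption for a skewing function); and (ii) since $a<c$ the interval of integration straddles $0$, so the powers $y^p$ and $y^{p-1}$ must be interpreted under the same convention that makes the left-hand side $\mathbb{E}[(X-c)^p]$ meaningful (e.g.\ $p$ a positive integer, or an absolute-value/signed-power reading), after which no further care is needed. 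I would expect the final write-up to amount to essentially the two displayed lines above.
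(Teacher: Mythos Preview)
Your argument is correct and yields the stated formula. The only difference from the paper is presentational: the paper does not write the computation as a single integration by parts but instead represents $(x-c)^p$ as $p\int_0^{x-c}y^{p-1}\,{\rm d}y$, splits the outer $x$-integral at $c$, and then swaps the order of integration in each piece (a layer-cake/Fubini argument). After evaluating the inner $x$-integrals and recombining, the paper arrives at exactly the same boundary terms and the same integral $I_G(c;a-c,b-c,p)$ that you obtain. The two routes are equivalent in content---Fubini applied to the region $\{(x,y):y\ \text{between}\ 0\ \text{and}\ x-c\}$ is precisely what underlies the Stieltjes integration-by-parts formula---but your version is more compact and avoids the bookkeeping of the split at $c$. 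Your caveats about the meaning of $y^p$ and $y^{p-1}$ for $y<0$ when $p$ is not an integer apply equally to the paper's derivation, which implicitly relies on the same convention.
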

\begin{proof}
From \eqref{def-F-p}, we have
\begin{align*}
\mathbb{E}[(X-c)^p]
&=
{1\over G(b)-G(a)}
\int_{a}^{b} (x-c)^p\, {\rm d}G(x)
\\[0,2cm]
&
=
{p\over G(b)-G(a)}
\left\{
\int_{a}^{c} \left[\,\int_{0}^{x-c} y^{p-1}\, {\rm d}y\right] {\rm d}G(x)
+
\int_{c}^{b} \left[\,\int_{0}^{x-c} y^{p-1}\, {\rm d}y\right] {\rm d}G(x)
\right\}.
\end{align*}
Change the order of integration,
the above expression can be written as
\begin{align*}
&=
{p\over G(b)-G(a)}
\left\{
\int_{0}^{a-c}y^{p-1} \left[\,\int_{a}^{y+c} {\rm d}G(x) \right]  {\rm d}y
+
\int_{0}^{b-c}y^{p-1} \left[\,\int_{y+c}^{b} {\rm d}G(x) \right]  {\rm d}y
\right\}
\nonumber
\\[0,2cm]
&=
{1\over G(b)-G(a)}
\left[
{(b-c)^p G(b)}
-
{(a-c)^p G(a)}
\right] \nonumber
\\[0,2cm]
&
-
{p\over G(b)-G(a)}
\int_{a-c}^{0} y^{p-1} G(y+c)\,  {\rm d}y
-
{p\over G(b)-G(a)}
\int_{0}^{b-c} y^{p-1} G(y+c)\,  {\rm d}y,
\end{align*}
which completes the proof of the theorem.
\end{proof}

\begin{corollary}\label{prop-bound-partial}
	Under the conditions of Theorem \ref{lemma-initial}, 
	we have
	\begin{align*}
	\mathbb{E}[(X-c)^p]
	\leqslant
		\begin{cases}
		(b-c)^p [1-F_X(c)]
		+
		(a-c)^{p} F_X(c), & \text{if} \ p \ \text{is even},
		\\[0,2cm]
		(b-c)^p [1-F_X(c)], & \text{if} \ p \ \text{is odd},
		\end{cases}
	\end{align*}
	where $F_X$ is as given in \eqref{def-F-p}.
\end{corollary}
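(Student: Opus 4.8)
The plan is to substitute the closed-form expression of Theorem~\ref{lemma-initial} directly into the asserted inequality and reduce everything to a single lower bound on the integral term $I_G(c;a-c,b-c,p)$. Writing $1-F_X(c)=[G(b)-G(c)]/[G(b)-G(a)]$ and $F_X(c)=[G(c)-G(a)]/[G(b)-G(a)]$ from \eqref{def-F-p}, every term in both the left- and right-hand sides carries the same positive constant $G(b)-G(a)$ in its denominator; cancelling it and doing a one-line subtraction, the claim of Corollary~\ref{prop-bound-partial} becomes exactly
\begin{align*}
p\,I_G(c;a-c,b-c,p)\ \geqslant\
\begin{cases}
G(c)\,[(b-c)^p-(a-c)^p], & p \ \text{even},\\
G(c)\,(b-c)^p-G(a)\,(a-c)^p, & p \ \text{odd}.
\end{cases}
\end{align*}
(The even/odd dichotomy in the statement forces $p$ to be a positive integer, which is the regime in which this is to be read.)

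Next I would split the integral at the origin, $I_G=\int_{a-c}^{0}y^{p-1}G(y+c)\,{\rm d}y+\int_{0}^{b-c}y^{p-1}G(y+c)\,{\rm d}y$, and estimate each piece using only monotonicity of $G$ together with the sign of $y^{p-1}$ on the interval at hand. On $(0,b-c)$ one has $y^{p-1}>0$ and $G(y+c)\geqslant G(c)$, so $p\int_{0}^{b-c}y^{p-1}G(y+c)\,{\rm d}y\geqslant G(c)(b-c)^p$. On $(a-c,0)$ the factor $y^{p-1}$ has the sign of $(-1)^{p-1}$: when $p$ is even it is negative, and multiplying $G(y+c)\leqslant G(c)$ by $y^{p-1}<0$ (which reverses the inequality) gives $p\int_{a-c}^{0}y^{p-1}G(y+c)\,{\rm d}y\geqslant -G(c)(a-c)^p$; when $p$ is odd it is positive, and multiplying $G(y+c)\geqslant G(a)$ by $y^{p-1}>0$ gives $p\int_{a-c}^{0}y^{p-1}G(y+c)\,{\rm d}y\geqslant -G(a)(a-c)^p$. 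Adding the two estimates produces precisely the displayed lower bound on $p\,I_G$ in each case, and plugging back into the formula of Theorem~\ref{lemma-initial} finishes the argument.

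I expect the only delicate point to be the bookkeeping of inequality directions: the multiplier $y^{p-1}$ flips sign on the left sub-interval according to the parity of $p$, so each case must use a different one-sided monotonicity bound on $G(y+c)$ (the value at $c$ for even $p$, the value at $a$ for odd $p$), and one must remember to reverse the inequality when multiplying through by a negative $y^{p-1}$. As a consistency check — and a shortcut that bypasses the formula entirely — one can instead write $\mathbb{E}[(X-c)^p]=\mathbb{E}\big[(X-c)^p\,\mathbf{1}_{\{X>c\}}\big]+\mathbb{E}\big[(X-c)^p\,\mathbf{1}_{\{X\leqslant c\}}\big]$ and bound each summand by its extreme value on $[a,b]$: on $\{X>c\}$ always $0<(X-c)^p\leqslant (b-c)^p$, whereas on $\{X\leqslant c\}$ the term is $\leqslant 0$ when $p$ is odd and $\leqslant (c-a)^p=(a-c)^p$ when $p$ is even. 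This reproduces both cases at once and also makes transparent why Popoviciu's inequality is subsumed: take $p=2$ and $c=(a+b)/2$, then bound the variance by the second moment about $c$.
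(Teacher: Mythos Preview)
Your main argument is essentially the paper's proof: both split $I_G$ at the origin, bound $G(y+c)$ via monotonicity according to the sign of $y^{p-1}$, and feed the resulting lower bound on $I_G$ back into the formula of Theorem~\ref{lemma-initial}. You are in fact more explicit than the paper, which only details the even case and dismisses the odd case as ``analogous''; your observation that the odd case requires bounding $G(y+c)$ below by $G(a)$ rather than $G(c)$ on the left sub-interval is exactly the point one must check there.

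Your closing ``consistency check'' via the decomposition $\mathbb{E}[(X-c)^p]=\mathbb{E}[(X-c)^p\mathbf{1}_{\{X>c\}}]+\mathbb{E}[(X-c)^p\mathbf{1}_{\{X\leqslant c\}}]$ is a genuinely different and more elementary route: it bypasses Theorem~\ref{lemma-initial} entirely and yields the corollary in one line, whereas the paper's route (and your primary one) is tailored to exercise the closed-form moment expression. The integral approach has the advantage of staying within the machinery the paper is developing, but your shortcut makes the probabilistic content of the bound transparent and would serve equally well as the proof.
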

\begin{proof}
It is evident that
\begin{align}\label{decomposition of I}
I_G(c;a-c,b-c,p)=I_G(c;a-c,0,p)+I_G(c;0,b-c,p).
\end{align}

Assuming $p$ to be even, for $a-c<y<0$, we have $y^{p-1} G(c)< y^{p-1} G(y+c)< y^{p-1}G(a)$, and for $0<y<b-c$, we have $y^{p-1}G(c)<y^{p-1} G(y+c)<y^{p-1}G(b)$. Consequently, we get
\begin{align*}
I_G(c;a-c,0,p)\geqslant -{G(c) (c-a)^p\over p} 
\quad 
\text{and} 
\quad
I_G(c;0,b-c,p)\geqslant {G(c) (b-c)^p\over p}.
\end{align*}
Now, upon using these inequalities in \eqref{decomposition of I}, we obtain
\begin{align*}
	I_G(c;a-c,b-c,p)\geqslant  {G(c)\over p}\, \big[(b-c)^p-(c-a)^p\big].
\end{align*}
Using the above inequality in the expression in Theorem \ref{lemma-initial}, we get
\begin{align*}
\mathbb{E}[(X-c)^p]
\leqslant
\dfrac{1}{G(b)-G(a)}\,
\big\{
(b-c)^p G(b)
-
(a-c)^p G(a)
-
G(c) \big[(b-c)^p-(c-a)^p\big]
\big\}.
\end{align*}
From \eqref{def-F-p}, the right-hand side of the above inequality can be rewritten as 
\begin{align*}
=
(b-c)^p [1-F_X(c)]
+
(a-c)^{p} F_X(c). 
\end{align*}
This proves the inequality for the case when $p$ is even.

The inequality for the case when $p$ is odd can be established in an analogous manner.
\end{proof}

When $p$ is even, a lower bound for $\mathbb{E}[(X-c)^p]$ can be established as below.
\begin{proposition}\label{prop-lower-bound}
Under the conditions of Theorem \ref{lemma-initial}, 
for $p$ even, we have
\begin{align*}
	\mathbb{E}[(X-c)^p]
	\geqslant 
	(t-c)^p \left[F_X(b)-F_X(t)\right], \quad \text{if} \ c<t<b.
\end{align*}
\end{proposition}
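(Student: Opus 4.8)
The plan is to argue directly from the definition of the expectation, using that $p$ even forces the integrand $(x-c)^p$ to be nonnegative on all of $(a,b)$. Writing, as in the proof of Theorem~\ref{lemma-initial},
\[
\mathbb{E}[(X-c)^p]
=
\frac{1}{G(b)-G(a)}\int_{a}^{b}(x-c)^p\,{\rm d}G(x),
\]
I would first split the integral at $t$ and discard the contribution over $(a,t)$: since $p$ is even, $(x-c)^p\geqslant 0$ for every $x$, hence $\int_{a}^{t}(x-c)^p\,{\rm d}G(x)\geqslant 0$ and therefore
\[
\mathbb{E}[(X-c)^p]\geqslant \frac{1}{G(b)-G(a)}\int_{t}^{b}(x-c)^p\,{\rm d}G(x).
\]

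Second, on the remaining range $t<x<b$ we have $x-c>t-c>0$ because $c<t$, so $(x-c)^p\geqslant (t-c)^p$ (both quantities positive, $p>0$). Substituting this lower bound into the integral gives
\[
\mathbb{E}[(X-c)^p]\geqslant \frac{(t-c)^p}{G(b)-G(a)}\int_{t}^{b}{\rm d}G(x)
=(t-c)^p\,\frac{G(b)-G(t)}{G(b)-G(a)}.
\]
Finally, recognizing from \eqref{def-F-p} that $\dfrac{G(b)-G(t)}{G(b)-G(a)}=F_X(b)-F_X(t)$ yields the claimed inequality.

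There is essentially no hard step here; the only point that requires care is that the parity hypothesis on $p$ is exactly what legitimizes dropping the $(a,t)$ part of the integral, since on $(a,c)$ the factor $(x-c)^p$ would be negative when $p$ is odd. Alternatively, one could obtain the same bound starting from the closed-form expression in Theorem~\ref{lemma-initial} and bounding $I_G(c;a-c,b-c,p)$ suitably from above, but the direct route above is shorter and more transparent, and it makes clear that the bound is just a one-sided, Markov-type estimate on the nonnegative random variable $(X-c)^p$.
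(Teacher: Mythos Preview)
Your argument is correct and is essentially identical to the paper's: the paper writes the same two steps with indicator functions, namely $\mathbb{E}[(X-c)^p]\geqslant \mathbb{E}\big[(X-c)^p\mathds{1}_{\{X>t\}}\big]\geqslant (t-c)^p\,\mathbb{E}\big[\mathds{1}_{\{X>t\}}\big]$, which is exactly your integral splitting and pointwise lower bound. Your closing remark that this is just a Markov-type estimate on the nonnegative variable $(X-c)^p$ captures the idea precisely.
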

\begin{proof}
Suppose $p$ is even. Then, as
$(X-c)^p\geqslant 0$ and $(X-c)^p\geqslant (t-c)^p$ for $X>t>c$, it is clear that
\begin{align*}
\mathbb{E}[(X-c)^p]
\geqslant
\mathbb{E}\big[(X-c)^p\mathds{1}_{\{X>t\}}\big]
\geqslant
(t-c)^p\,
\mathbb{E}\big[\mathds{1}_{\{X>t\}}\big],
\end{align*}
which yields the required result.
\end{proof}

\begin{proposition}\label{popoviciu-generalized}
	Under the conditions of Theorem \ref{lemma-initial}, 
	we have
\begin{align*}
	\min_{a<c<b}\mathbb{E}[(X-c)^p]
	\leqslant
	\begin{dcases}
	\left(\dfrac{b-a}{2}\right)^p, & \text{if} \ p \ \text{is even},
	\\[0,2cm]
	0, & \text{if} \ p\geqslant 1 \ \text{is odd}.
	\end{dcases}
\end{align*}
\end{proposition}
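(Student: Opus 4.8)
The key observation is that the quantity $\min_{a<c<b}\mathbb{E}[(X-c)^p]$ can be bounded from above by evaluating $\mathbb{E}[(X-c)^p]$ at a single cleverly chosen value of $c$, namely the midpoint $c = (a+b)/2$. Indeed, $\min_{a<c<b}\mathbb{E}[(X-c)^p] \leqslant \mathbb{E}[(X-c_0)^p]$ for any fixed $c_0 \in (a,b)$, so it suffices to produce one good choice.

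For the even case, I would take $c_0 = (a+b)/2$ and simply note that $|X - c_0| \leqslant (b-a)/2$ almost surely, since $X$ takes values in $(a,b)$. Raising to the even power $p$ preserves the inequality: $(X-c_0)^p = |X-c_0|^p \leqslant \big(\tfrac{b-a}{2}\big)^p$, and taking expectations gives $\mathbb{E}[(X-c_0)^p] \leqslant \big(\tfrac{b-a}{2}\big)^p$. This yields the claimed bound. (Alternatively, one could apply Corollary \ref{prop-bound-partial} at $c = c_0$ and observe that $(b-c_0)^p[1-F_X(c_0)] + (a-c_0)^p F_X(c_0) = \big(\tfrac{b-a}{2}\big)^p\big([1-F_X(c_0)] + F_X(c_0)\big) = \big(\tfrac{b-a}{2}\big)^p$, since $|b-c_0| = |a-c_0| = (b-a)/2$; this route reuses machinery already in the paper.)

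For the odd case with $p \geqslant 1$, the quantity $\mathbb{E}[(X-c)^p]$ is continuous in $c$ on $(a,b)$ (this follows from the closed-form expression in Theorem \ref{lemma-initial}, since $G$ is a CDF hence has at most countably many discontinuities and $I_G$ is continuous in $c$; more directly, one can invoke dominated convergence using the bound $|X-c|^p \leqslant (b-a)^p$). As $c \to a^+$, $(X-c)^p = (X-c)^p > 0$ a.s.\ (since $X > a$ a.s.), so $\mathbb{E}[(X-c)^p] > 0$ for $c$ near $a$; as $c \to b^-$, $(X-c)^p = -(c-X)^p < 0$ a.s.\ (since $X < b$ a.s.), so $\mathbb{E}[(X-c)^p] < 0$ for $c$ near $b$. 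By the intermediate value theorem there exists $c^\ast \in (a,b)$ with $\mathbb{E}[(X-c^\ast)^p] = 0$, hence $\min_{a<c<b}\mathbb{E}[(X-c)^p] \leqslant 0$.

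The main obstacle is the odd case: one must justify continuity of $c \mapsto \mathbb{E}[(X-c)^p]$ and the sign changes near the endpoints carefully. Continuity is cleanest via dominated convergence (the integrand $(x-c)^p$ is jointly continuous and uniformly bounded by $(b-a)^p$ on the compact support), after which the sign argument and IVT finish it. The even case is essentially immediate from boundedness of the support.
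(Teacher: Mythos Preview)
Your proposal is correct and takes a more elementary route than the paper. For the even case, the paper invokes Corollary~\ref{prop-bound-partial}, then bounds the right-hand side by $\max\{1-F_X(c),F_X(c)\}\cdot\big((b-c)^p+(a-c)^p\big)$ and minimizes the two factors separately before evaluating at the midpoint; your direct pointwise bound $|X-(a+b)/2|\leqslant(b-a)/2$ bypasses this machinery entirely, and your alternative of evaluating the corollary at $c_0=(a+b)/2$ already collapses the paper's two-step optimization into one line. For the odd case, the paper simply notes that the corollary's bound $(b-c)^p[1-F_X(c)]$ tends to $0$ as $c\to b^-$, whereas you use continuity together with a sign change and the intermediate value theorem to exhibit an actual zero of $c\mapsto\mathbb{E}[(X-c)^p]$; your argument is more explicit and in fact shows the bound $0$ is attained.

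One small wording issue in the odd case: for $c$ slightly larger than $a$ it is not literally true that $(X-c)^p>0$ almost surely, since there may be mass in $(a,c)$. What you really mean (and what your dominated-convergence remark already sets up) is that $\lim_{c\to a^+}\mathbb{E}[(X-c)^p]=\mathbb{E}[(X-a)^p]>0$, and symmetrically $\lim_{c\to b^-}\mathbb{E}[(X-c)^p]=\mathbb{E}[(X-b)^p]<0$. With that phrasing adjusted, the argument is complete.
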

\begin{proof}
Suppose $p$ is even. In this case, from Corollary \ref{prop-bound-partial}, we have
\begin{align}\label{des-1}
\mathbb{E}[(X-c)^p]
\leqslant
(b-c)^p [1-F_X(c)]
+
(a-c)^{p} F_X(c)
\leqslant
S(F(c))
T(c),
\end{align}
where $S(F(c))=\max\left\{1-F_X(c), F_X(c)\right\}$ and $T(c)=(b-c)^p
+
(a-c)^{p}$. As $F_X(c)=1/2$ is a minimum point of $M(F(c))$, we have
\begin{align*}
	\min_{0<F(c)<1} S(F(c))={1\over 2}.
\end{align*}
Taking the minimum over $0<F(c)<1$ in \eqref{des-1}, we get
\begin{align*}
\mathbb{E}[(X-c)^p]
\leqslant
{1\over 2}\,
T(c).
\end{align*}
Now, taking the minimum over $a<c<b$ in the above inequality and using the fact that the function $T(c) $ reaches a minimum value at the point $c=(a+b)/2$, we get
\begin{align*}
	\min_{a<c<b} \mathbb{E}[(X-c)^p]\leqslant {1\over 2}\, T\left({a+b\over 2}\right)=\left(\dfrac{b-a}{2}\right)^p.
\end{align*}
This proves the inequality for the case when $p$ is even. Further, 
the inequality for the case when $p$ is odd trivially follows from Corollary \ref{prop-bound-partial}.
%
%
%
\end{proof}

\begin{proposition}\label{Main-Theorem}
The variance $\sigma^2$ can be expressed as
	\begin{align*}
	\sigma^2 = {\rm Var}(X)
		=
		\dfrac{1}{G(b)-G(a)}\,
		\big[
		(b-\mu)^2 G(b)
		-
		(a-\mu)^2 G(a)
		-
		2 I_G(\mu;a-\mu,b-\mu)
		\big],
	\end{align*}
	where $\mu=\mathbb{E}(X)$ and  
	\begin{align*}
	I_G(\mu;s,t)=\int_{s}^{t} y G(y+\mu)\,  {\rm d}y, \quad s<t.
	\end{align*}
\end{proposition}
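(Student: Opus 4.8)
The plan is to recognize that Proposition \ref{Main-Theorem} is nothing more than the specialization of Theorem \ref{lemma-initial} to the case $p=2$ and $c=\mu$, together with a small amount of bookkeeping to match the notation for $I_G$. Since $\sigma^2 = \mathrm{Var}(X) = \mathbb{E}[(X-\mu)^2]$ with $\mu = \mathbb{E}(X)$, the strategy is to check that $c=\mu$ is admissible in Theorem \ref{lemma-initial}, apply that theorem, and simplify.

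First I would verify that $a < \mu < b$, so that the hypothesis $a<c<b$ of Theorem \ref{lemma-initial} holds with $c=\mu$. This follows because $X$ has support $(a,b)$: writing $\mu = \int_a^b x\,\mathrm{d}F_X(x)$ and using that $F_X$ puts no mass outside $(a,b)$ and is not degenerate at an endpoint, one gets the strict inequalities $a<\mu<b$. Next I would invoke Theorem \ref{lemma-initial} with $p=2$ and $c=\mu$, which gives
\begin{align*}
\mathbb{E}[(X-\mu)^2]
=
\dfrac{1}{G(b)-G(a)}\,
\big[
(b-\mu)^2 G(b)
-
(a-\mu)^2 G(a)
-
2\, I_G(\mu;a-\mu,b-\mu,2)
\big].
\end{align*}
Finally I would observe that, by the definition of $I_G(c;s,t,p)$ with $p=2$,
\begin{align*}
I_G(\mu;a-\mu,b-\mu,2)
=
\int_{a-\mu}^{b-\mu} y^{\,2-1}\, G(y+\mu)\,\mathrm{d}y
=
\int_{a-\mu}^{b-\mu} y\, G(y+\mu)\,\mathrm{d}y
=
I_G(\mu;a-\mu,b-\mu),
\end{align*}
which is exactly the quantity appearing in the statement. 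Substituting this identity into the displayed expression yields the claimed formula for $\sigma^2$.

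The only point requiring any care — and hence the ``main obstacle,'' though it is a mild one — is the justification that $\mu$ genuinely lies in the open interval $(a,b)$ so that Theorem \ref{lemma-initial} applies verbatim; everything else is a direct substitution. If one wished to be fully rigorous about a possible degenerate case (all mass at a single point, which cannot happen for a truncation of a CDF $G$ with infinite support and $G(a)<G(b)$), one would note that $G(b)-G(a)>0$ is implicitly assumed throughout, and that $F_X$ is then a genuine non-degenerate distribution on $(a,b)$, forcing $a<\mu<b$.
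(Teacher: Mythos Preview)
Your proposal is correct and follows exactly the paper's own approach: the paper's proof consists of the single sentence ``By taking $c=\mu$ and $p=2$ in Theorem \ref{lemma-initial}, we obtain the required result.'' Your extra care in checking $a<\mu<b$ and in matching $I_G(\mu;s,t,2)$ with $I_G(\mu;s,t)$ only makes the argument more complete than the paper's version.
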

\begin{proof}
By taking $c=\mu$ and $p=2$ in Theorem \ref{lemma-initial}, 
we obtain the required result.
\end{proof}

\begin{proposition}
The Popoviciu inequality \citep{Popoviciu1935} on variances given by
\begin{align*}
	\sigma^2\leqslant\left(\dfrac{b-a}{2}\right)^2,
\end{align*}
follows from Proposition \ref{popoviciu-generalized}.
\end{proposition}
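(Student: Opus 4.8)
The plan is to specialize Proposition~\ref{popoviciu-generalized} to the case $p=2$, which is even, and then to observe that the left-hand side of the desired inequality dominates the minimum appearing there. Concretely, Proposition~\ref{popoviciu-generalized} with $p=2$ yields
\begin{align*}
\min_{a<c<b}\mathbb{E}[(X-c)^2]\leqslant\left(\dfrac{b-a}{2}\right)^2.
\end{align*}
So the only remaining point is to identify $\sigma^2$ with a particular value of the function $c\mapsto\mathbb{E}[(X-c)^2]$, or at least to bound $\sigma^2$ above by its minimum over $c$.

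The key step is the elementary identity $\mathbb{E}[(X-c)^2]=\sigma^2+(\mu-c)^2$, valid for every real $c$, where $\mu=\mathbb{E}(X)$; this follows by expanding the square and using $\mathbb{E}[(X-\mu)(\mu-c)]=0$. In particular, taking $c=\mu\in(a,b)$ (note $\mu$ indeed lies strictly between $a$ and $b$ since $X$ is supported on $(a,b)$ and nondegenerate), we get $\mathbb{E}[(X-\mu)^2]=\sigma^2$, so that
\begin{align*}
\sigma^2=\mathbb{E}[(X-\mu)^2]\geqslant\min_{a<c<b}\mathbb{E}[(X-c)^2].
\end{align*}
Chaining this with the bound from Proposition~\ref{popoviciu-generalized} gives $\sigma^2\leqslant\left(\tfrac{b-a}{2}\right)^2$, which is exactly Popoviciu's inequality.

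I do not anticipate a genuine obstacle here, since the statement is an immediate corollary; the only mild subtlety is a bookkeeping one, namely making sure the minimization in Proposition~\ref{popoviciu-generalized} is over an open interval containing $\mu$, so that $c=\mu$ is an admissible choice. Alternatively, one can bypass the identity altogether and note directly that $\min_{a<c<b}\mathbb{E}[(X-c)^2]$ is attained at $c=\mu$ (this is the standard fact that the mean minimizes mean squared deviation), so the minimum equals $\sigma^2$ and Proposition~\ref{popoviciu-generalized} reads $\sigma^2\leqslant\left(\tfrac{b-a}{2}\right)^2$ verbatim. Either route is a one-line deduction, so the proof will be very short.
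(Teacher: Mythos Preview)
Your overall approach matches the paper's (which literally says ``set $p=2$''), and your second route --- noting that $c\mapsto\mathbb{E}[(X-c)^2]$ is minimized at $c=\mu$, so the minimum \emph{equals} $\sigma^2$ --- is correct and is really all that is needed.

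However, the ``chaining'' step in your first route is a logical slip. From $\sigma^2=\mathbb{E}[(X-\mu)^2]$ you deduce
\[
\sigma^2 \;\geqslant\; \min_{a<c<b}\mathbb{E}[(X-c)^2],
\]
and then claim that together with $\min_{a<c<b}\mathbb{E}[(X-c)^2]\leqslant\big(\tfrac{b-a}{2}\big)^2$ this yields $\sigma^2\leqslant\big(\tfrac{b-a}{2}\big)^2$. It does not: the two inequalities point in incompatible directions. What you actually need is $\sigma^2\leqslant\min_{a<c<b}\mathbb{E}[(X-c)^2]$, and your own identity $\mathbb{E}[(X-c)^2]=\sigma^2+(\mu-c)^2$ gives exactly that (indeed equality, once you take the infimum and use $\mu\in(a,b)$). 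So the fix is immediate: replace the displayed ``$\geqslant$'' by the observation that the identity shows $\sigma^2$ is the minimum value, and then the chain $\sigma^2=\min\leqslant\big(\tfrac{b-a}{2}\big)^2$ goes through. With that one-word correction your argument is complete and coincides with (a slightly more detailed version of) the paper's proof.
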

\begin{proof}
	The proof follows immediately by setting $p=2$ in  Proposition \ref{popoviciu-generalized}.
\end{proof}

A reverse form of Popoviciu's inequality can be obtained upon taking $p=2, c=\mu=\mathbb{E}(X)$ and $t=(\mu+b)/2$ in Proposition \ref{prop-lower-bound}.
\begin{proposition}
We have
\begin{align*}
	\sigma^2 \geqslant\left(\dfrac{b-\mu}{2}\right)^2 \left[F_X(b)-F_X\left({\mu+b\over 2}\right)\right].
\end{align*}
\end{proposition}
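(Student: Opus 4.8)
The plan is to invoke Proposition \ref{prop-lower-bound} directly with the stated substitutions and then simplify. First I would check that the hypotheses of Proposition \ref{prop-lower-bound} are met: we need $p$ even, which is satisfied by $p=2$, and we need $c<t<b$, which here reads $\mu < (\mu+b)/2 < b$. This double inequality is equivalent to $\mu < b$, which holds because $X$ has support $(a,b)$ and hence $\mathbb{E}(X)=\mu \in (a,b)$; so the hypotheses are in force.

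Next I would substitute $p=2$, $c=\mu$, and $t=(\mu+b)/2$ into the conclusion $\mathbb{E}[(X-c)^p] \geqslant (t-c)^p[F_X(b)-F_X(t)]$ of Proposition \ref{prop-lower-bound}. The left-hand side becomes $\mathbb{E}[(X-\mu)^2] = \sigma^2$ by definition of the variance. On the right-hand side, $(t-c)^p = \big((\mu+b)/2 - \mu\big)^2 = \big((b-\mu)/2\big)^2$, and $F_X(t) = F_X\big((\mu+b)/2\big)$, so the bound reads exactly
\begin{align*}
\sigma^2 \geqslant \left(\frac{b-\mu}{2}\right)^2\left[F_X(b)-F_X\left(\frac{\mu+b}{2}\right)\right],
\end{align*}
which is the claimed reverse Popoviciu inequality.

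There is essentially no obstacle here: the entire argument is a specialization of an already-established proposition, and the only thing to verify carefully is the admissibility of the choice $t=(\mu+b)/2$, i.e., that it lies strictly between $\mu$ and $b$. Since this is a one-line consequence of $\mu<b$, the proof is immediate. If one wanted to make the statement sharper one could optimize over all admissible $t\in(\mu,b)$ rather than fixing the midpoint, but for the purpose of exhibiting a clean reverse companion to Popoviciu's inequality, the midpoint choice is the natural one and yields the displayed bound directly.
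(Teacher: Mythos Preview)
Your proposal is correct and mirrors the paper's own argument exactly: the paper obtains this proposition simply by taking $p=2$, $c=\mu$, and $t=(\mu+b)/2$ in Proposition~\ref{prop-lower-bound}. Your added verification that $\mu<(\mu+b)/2<b$ follows from $\mu\in(a,b)$ is a welcome detail that the paper leaves implicit.
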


\subsubsection*{Asymptotic behavior}
In the following theorem, we establish the asymptotic behaviour of the $p$-th moment.

\begin{theorem}\label{prop-limits}
	If $X$ is distributed as in \eqref{def-F-p}, then
	we have
\begin{align*}
\lim_{{b-a\over 2}\longrightarrow 0^+} \mathbb{E}\left[\left(\dfrac{X-a}{b-a}\right)^p\,\right]
=
{1\over p+1}, \quad p>-1,
\end{align*}
and
\begin{align*}
\lim_{{b-a\over 2}\longrightarrow \infty} \mathbb{E}\left[\left(\dfrac{X-a}{b-a}\right)^p\,\right]
=
{1\over 2^p}.
\end{align*}
\end{theorem}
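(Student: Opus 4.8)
The plan is to write the normalized moment as an integral of $t^{p}$ against the law of the rescaled variable $U_{a,b}:=(X-a)/(b-a)$, which lives on $(0,1)$, and then to pass to the limit inside that integral. Starting from \eqref{def-F-p}, the substitution $x=a+(b-a)t$ gives, for every $p>-1$,
\begin{align*}
\mathbb{E}\left[\left(\frac{X-a}{b-a}\right)^{p}\right]
=\frac{1}{G(b)-G(a)}\int_{0}^{1}t^{p}\,{\rm d}G\big(a+(b-a)t\big),
\end{align*}
which, when $G$ has a density $g$, is the weighted average
\begin{align*}
\mathbb{E}\left[\left(\frac{X-a}{b-a}\right)^{p}\right]
=\frac{\int_{0}^{1}t^{p}\,g\big(a+(b-a)t\big)\,{\rm d}t}{\int_{0}^{1}g\big(a+(b-a)t\big)\,{\rm d}t}
\end{align*}
(for $p>0$ this is also the $c=a$ case of Theorem~\ref{lemma-initial}). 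So the whole problem is to control $g(a+(b-a)t)$, equivalently to identify the weak limit of the distribution function $H_{a,b}(t)=[G(a+(b-a)t)-G(a)]/[G(b)-G(a)]$ of $U_{a,b}$ on $[0,1]$, in the two regimes.

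For the first limit I would let the endpoints collapse to a point $c_{0}$ at which $G$ has a positive, continuous density $g$ (for instance $a=c_0$ fixed and $b\downarrow c_0$, or the symmetric window $a=c_0-\ell$, $b=c_0+\ell$ with $\ell=(b-a)/2\to0^{+}$). Then $g(a+(b-a)t)\to g(c_{0})$ for each $t$, and for $b-a$ small the integrand $t^{p}g(a+(b-a)t)$ is bounded by $t^{p}$ times $\sup g$ over a fixed compact neighborhood of $c_0$, which is integrable exactly because $p>-1$; dominated convergence then sends the ratio above to $g(c_0)\int_0^1 t^p\,{\rm d}t\big/g(c_0)=1/(p+1)$. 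Probabilistically this is just $H_{a,b}(t)\to t$, i.e.\ $U_{a,b}$ converges in law to $\mathrm{Uniform}(0,1)$, whose $p$-th moment is $1/(p+1)$.

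For the second limit the relevant reading is that the window widens about a bounded centre, $a=c_0-\ell$, $b=c_0+\ell$ with $\ell=(b-a)/2\to\infty$ (keeping $a$ fixed instead would give the degenerate value $0$). Then $a+(b-a)t=c_0+\ell(2t-1)$, so for each fixed $t$ this tends to $-\infty$ when $t<\tfrac12$ and to $+\infty$ when $t>\tfrac12$; since moreover $G(a)\to0$ and $G(b)\to1$, we get $H_{a,b}(t)\to\mathds{1}_{\{t>1/2\}}$ at every $t\neq\tfrac12$, i.e.\ $U_{a,b}$ converges in law to the point mass at $\tfrac12$, and hence $\int_0^1 t^{p}\,{\rm d}H_{a,b}(t)\to 2^{-p}$. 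Equivalently, for $p>0$ one can take the limit directly in the closed form of Theorem~\ref{lemma-initial} at $c=a$: there $G(b)\to1$, $G(a)\to0$, and $p\int_0^1 u^{p-1}G(c_0+\ell(2u-1))\,{\rm d}u\to p\int_{1/2}^{1}u^{p-1}\,{\rm d}u=1-2^{-p}$ by dominated convergence, which leaves $2^{-p}$.

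The step I expect to be the main obstacle is upgrading these weak-convergence (pointwise-$H_{a,b}$) statements to convergence of the $p$-th moment when $-1<p<0$: then $t^{p}$ is an unbounded test function with an integrable singularity at the origin, so weak convergence alone does not suffice and one needs a uniform-integrability bound near $t=0$. In the first regime the dominated-convergence estimate above already provides it; in the second regime one has to check separately that $\int_0^{\delta}t^{p}\,{\rm d}H_{a,b}(t)\to0$ for small $\delta<\tfrac12$, which follows from a Stieltjes integration-by-parts together with the bounds $H_{a,b}(t)\le G(c_0+\ell(2t-1))/[G(b)-G(a)]\to0$ for $t<\tfrac12$ and local boundedness of $g$ near $0$ — or else one simply restricts, as the phrasing of the statement arguably intends for the second part, to $p>0$, where $t^{p}$ is bounded and continuous on $[0,1]$ and no extra argument is needed. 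A secondary point to record at the outset is the regularity used for the first limit, namely that $G$ is differentiable with finite nonzero derivative at the collapse point $c_0$; without it the limit can change (e.g.\ $G(x)-G(c_0)\sim C(x-c_0)^{2}$ near $c_0$ would give $2/(p+2)$ instead of $1/(p+1)$).
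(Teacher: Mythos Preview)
Your approach is essentially the paper's: both identify the weak limit of $U_{a,b}=(X-a)/(b-a)$ as $\mathrm{Uniform}(0,1)$ when $(b-a)/2\to0^{+}$ and as the point mass at $\tfrac12$ when $(b-a)/2\to\infty$, and then upgrade this to convergence of the $p$-th moment. The only procedural difference is in the upgrade step: the paper invokes the general fact that convergence in distribution together with uniform integrability implies convergence of means (Billingsley, Theorem~5.4), whereas you work with the density representation and apply dominated convergence directly. Notably, the two caveats you single out are exactly the points the paper leaves implicit: it asserts $F_{U_{a,b}}(z)\to z$ without recording the differentiability of $G$ (with nonzero derivative) at the collapse point that this requires, and it justifies uniform integrability of $U_{a,b}^{p}$ for all $p>-1$ merely from $0<U_{a,b}<1$, which is automatic for $p\geqslant0$ but not for $-1<p<0$. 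So your write-up is, if anything, more careful on these details; the overall strategy is the same.
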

\begin{proof}
From \eqref{def-F-p}, we find
	\begin{align*}
	F_{\frac{X-a}{b-a}}(z)
	=
	{G\big(z(b-a)+a\big)-G(a)\over G(b)-G(a)}
	=
	\dfrac{G\big({b-a\over 2}\,(2z-1)+{a+b\over 2}\big)-G\big({a+b\over 2}-{b-a\over 2}\big)}{ 
	G\big({b-a\over 2}+{a+b\over 2}\big)-G\big({a+b\over 2}-{b-a\over 2}\big)}, \ 0<z<1.
	\end{align*}	
Now, it is a simple task to verify that
\begin{align*}
\lim_{{b-a\over 2}\longrightarrow 0} F_{\frac{X-a}{b-a}}(z)
=
F_U(z), 
\quad \forall z\in\mathbb{R},
\end{align*}
and
\begin{align*}
\lim_{{b-a\over 2}\longrightarrow \infty} F_{\frac{X-a}{b-a}}(z)
=
F_Y(z), 
\quad \forall z\neq{1\over 2},
\end{align*}
where $U\sim U(0,1)$ and $Y$ is a discrete random variable such that $\mathbb{P}(Y=1/2)=1$. 

Moreover, we note that $[(X-a)/(b-a)]^p$ is uniformly integrable because $0<(X-a)/(b-a)<1$. As convergence in distribution along with uniform integrability imply convergence in mean \cite[cf.][ Theorem 5.4]{Billingsley1968}, we have
\begin{align*}
\lim_{{b-a\over 2}\longrightarrow 0^+} \mathbb{E}\left[\left(\dfrac{X-a}{b-a}\right)^p\,\right]
=
\mathbb{E}(U^p), \quad p>-1,
\end{align*}
and
\begin{align*}
\lim_{{b-a\over 2}\longrightarrow \infty} \mathbb{E}\left[\left(\dfrac{X-a}{b-a}\right)^p\,\right]
=
\mathbb{E}(Y^p),
\end{align*}
which completes the proof of the theorem.
\end{proof}

\begin{corollary}\label{corollary-0}
We further have
	\begin{align*}
	\lim_{{b-a\over 2}\longrightarrow 0^+} {\sigma^2 +(\mu-a)^2\over (b-a)^2}={1\over 3}
	\quad
	\text{and}
	\quad 
	\lim_{{b-a\over 2}\longrightarrow \infty} {\sigma^2\over (b-a)^2}= 0.
	\end{align*}
where $\mu=\mathbb{E}(X)$ and $\sigma^2 = {\rm Var}(X)$.
\end{corollary}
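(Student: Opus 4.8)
The plan is to reduce everything to Theorem \ref{prop-limits} via the elementary identity
\begin{align*}
\mathbb{E}\!\left[\left(\frac{X-a}{b-a}\right)^{2}\right]
=\frac{\mathbb{E}\big[(X-a)^2\big]}{(b-a)^2}
=\frac{\mathrm{Var}(X)+(\mathbb{E}(X)-a)^2}{(b-a)^2}
=\frac{\sigma^2+(\mu-a)^2}{(b-a)^2},
\end{align*}
which is just the bias–variance decomposition $\mathbb{E}[(X-a)^2]=\mathrm{Var}(X)+(\mu-a)^2$ divided by $(b-a)^2$. So the ratio in the first display of the corollary is exactly $\mathbb{E}[((X-a)/(b-a))^2]$, and the first limit is immediate from Theorem \ref{prop-limits} applied with $p=2$: it equals $1/(2+1)=1/3$.

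For the second limit I would write
\begin{align*}
\frac{\sigma^2}{(b-a)^2}
=\frac{\mathbb{E}\big[(X-a)^2\big]-(\mu-a)^2}{(b-a)^2}
=\mathbb{E}\!\left[\left(\frac{X-a}{b-a}\right)^{2}\right]-\left(\mathbb{E}\!\left[\frac{X-a}{b-a}\right]\right)^{2},
\end{align*}
again by the bias–variance decomposition. Now apply Theorem \ref{prop-limits} twice, with $p=2$ and $p=1$: as $(b-a)/2\to\infty$ the first term tends to $1/2^2=1/4$ and $\mathbb{E}[(X-a)/(b-a)]\to 1/2^1=1/2$, so the square of the second term tends to $1/4$. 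Since both limits exist and are finite, the limit of the difference is $1/4-1/4=0$, as claimed.

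There is essentially no serious obstacle here; the only point requiring a line of care is the justification that the limit of the difference equals the difference of the limits, which is legitimate precisely because Theorem \ref{prop-limits} guarantees that each of the two limits ($p=1$ and $p=2$ cases) exists as a finite number. I would also remark in passing that the same computation shows $\lim_{(b-a)/2\to 0^+}(\mu-a)^2/(b-a)^2=1/4$ is \emph{not} separately forced (only the sum with $\sigma^2/(b-a)^2$ is pinned down to $1/3$), which is why the corollary is phrased with the combined numerator $\sigma^2+(\mu-a)^2$ in the first limit but with $\sigma^2$ alone in the second.
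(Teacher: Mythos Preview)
Your argument is correct and follows the paper's approach: both reduce the corollary to Theorem~\ref{prop-limits} via the identity $\mathbb{E}\big[((X-a)/(b-a))^2\big]=(\sigma^2+(\mu-a)^2)/(b-a)^2$. The paper's proof is a single line (``take $p=2$''), while you spell out explicitly that the second limit actually requires both the $p=1$ and $p=2$ cases of Theorem~\ref{prop-limits}; this is a worthwhile clarification, since $p=2$ alone only gives $(\sigma^2+(\mu-a)^2)/(b-a)^2\to 1/4$, not $\sigma^2/(b-a)^2\to 0$.
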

\begin{proof}
	By taking $p=2$ in Theorem \ref{prop-limits}, the above results follow readily.
\end{proof}

\subsection{Symmetric case}
\label{sec:2.2}

	In this section, we assume that $G$ is a skewing function, i.e., it is such that $G(x)\geqslant 0$ and $G(-x)=1-G(x)$, and $X$ distributed as in \eqref{def-F-p} with $a=-\ell$ and $b=\ell>0$. That is, $X$ has its CDF as
\begin{align}\label{def-F-s}
F_X(x)
=
{G(x)+G(\ell)-1\over 2G(\ell)-1}, \quad -\ell<x<\ell. 
\end{align}

\begin{proposition}\label{prop-var-ell}
The variance $\sigma^2$ can be expressed as
	\begin{align*}
	\sigma^2
	=
	\ell^2 H(\ell),
	\end{align*}
where 
       \begin{align*}
       H(\ell) = 1-\dfrac{2C(\ell)-1}{2G(\ell)-1},
       \end{align*}
 with $C(\ell)=C(\ell,G)$ defined as
	\begin{align*}
	C(\ell)={2\over \ell^2} \int_0^\ell yG(y)\, {\rm d}y.
	\end{align*}
	Moreover, $1/2\leqslant C(\ell)\leqslant G(\ell)$.
\end{proposition}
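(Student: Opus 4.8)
The plan is to specialize the general variance formula of Proposition~\ref{Main-Theorem} to the symmetric window $[-\ell,\ell]$ and then simplify everything using the skewing identity $G(-x)=1-G(x)$. The first thing I would establish is that the truncated law \eqref{def-F-s} is symmetric about the origin: for $-\ell<x<\ell$,
\begin{align*}
1-F_X(-x)
=
1-\frac{G(-x)+G(\ell)-1}{2G(\ell)-1}
=
\frac{G(\ell)-G(-x)}{2G(\ell)-1}
=
\frac{G(x)+G(\ell)-1}{2G(\ell)-1}
=
F_X(x).
\end{align*}
Since the support is bounded the mean exists, and this symmetry forces $\mu=\mathbb{E}(X)=0$.

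With $a=-\ell$, $b=\ell$, $\mu=0$, Proposition~\ref{Main-Theorem} (equivalently Theorem~\ref{lemma-initial} with $c=0$, $p=2$) gives $\sigma^2=\big[\ell^2 G(\ell)-\ell^2 G(-\ell)-2I_G(0;-\ell,\ell)\big]/\big(G(\ell)-G(-\ell)\big)$. Using $G(-\ell)=1-G(\ell)$, the denominator becomes $2G(\ell)-1$ and the first two bracketed terms collapse to $\ell^2(2G(\ell)-1)$. It then remains to rewrite $I_G(0;-\ell,\ell)=\int_{-\ell}^{\ell}yG(y)\,\mathrm{d}y$ in terms of $C(\ell)$. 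I would split the integral at $0$ and, in the piece over $[-\ell,0]$, substitute $y\mapsto-y$ and invoke $G(-y)=1-G(y)$; this turns the whole integral into $\int_0^\ell y\,[2G(y)-1]\,\mathrm{d}y=2\int_0^\ell yG(y)\,\mathrm{d}y-\ell^2/2=\ell^2\big(C(\ell)-\tfrac12\big)$. Substituting back and simplifying,
\begin{align*}
\sigma^2
=
\frac{\ell^2(2G(\ell)-1)-2\ell^2\big(C(\ell)-\tfrac12\big)}{2G(\ell)-1}
=
\ell^2\left(1-\frac{2C(\ell)-1}{2G(\ell)-1}\right)
=
\ell^2 H(\ell),
\end{align*}
which is the claimed identity.

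For the two-sided bound on $C(\ell)$ I would use that $G$, being a CDF, is non-decreasing, together with $G(0)=1/2$ (immediate from $G(0)=1-G(0)$). Hence $\tfrac12=G(0)\leqslant G(y)\leqslant G(\ell)$ for every $y\in[0,\ell]$; multiplying by $y\geqslant0$ and integrating over $[0,\ell]$ gives $\tfrac{\ell^2}{4}\leqslant\int_0^\ell yG(y)\,\mathrm{d}y\leqslant\tfrac{\ell^2}{2}G(\ell)$, and multiplying by $2/\ell^2$ yields $1/2\leqslant C(\ell)\leqslant G(\ell)$.

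None of these steps is genuinely difficult; the points that need a little care are justifying $\mu=0$ rather than merely assuming it, and getting the change of variables in the integral right, since a sign error there would propagate into $H$. The monotonicity of $G$ invoked for the bounds is exactly the defining property of a CDF assumed at the start of the section.
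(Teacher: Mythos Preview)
Your proof is correct and follows exactly the approach in the paper: specialize Proposition~\ref{Main-Theorem} to $a=-\ell$, $b=\ell$, $\mu=0$, rewrite $I_G(0;-\ell,\ell)$ as $\ell^2\big(C(\ell)-\tfrac12\big)$ via the skewing identity, and simplify. In fact you supply more detail than the paper does---in particular the justification of $\mu=0$ from symmetry and the proof of the bounds $1/2\leqslant C(\ell)\leqslant G(\ell)$, which the paper only asserts.
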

\begin{proof}
As $G$ is a skewing function, we have $\mu=\mathbb{E}(X)=0$. Moreover, $I_G(\mu;a-\mu,b-\mu)$ in Proposition \ref{Main-Theorem} satisfies the identity 
	\begin{align*}
	I_G(\mu;a-\mu,b-\mu)=\ell^2\left[C(\ell)-{1\over 2}\right].
	\end{align*}
Upon substituting this in Proposition \ref{Main-Theorem} and carrying out some simple algebraic steps, the required result follows.
\end{proof}

\begin{remark}
	It is useful to observe that, knowing $C(\ell)$ (see Table \ref{table:1} for some explicit examples of these constants), Proposition \ref{prop-var-ell} gives a more informative result than Popoviciu's inequality and present
	in particular a method for the exact calculation of the variance of truncated distributions of the form in \eqref{def-F-s}.
\end{remark}
\begin{table}[H]
	\caption{Some examples of constants  $C(\ell)$, for use in Proposition  \ref{prop-var-ell}.
	}
	\vspace*{0.15cm}
	\centering 
	\resizebox{\linewidth}{!}{
		\begin{tabular}{llll} 
			\hline
			Distribution
			& $G(x)$ &  $C(\ell)$ 
			\\ [0.5ex] 
			\noalign{\hrule }
			Normal 
			& $\Phi(x)$ &  
			${1\over 2\ell^2} 
			\left\{
			\ell \left[\ell + \exp\big(-{\ell^2\over 2}\big) \sqrt{{2\over \pi}}\,\right] + (\ell^2-1) {\rm erf}\big({\ell\over\sqrt{2}}\big)
			\right\}$ 
			\\ [1ex] 
			Student-$t$ ($\nu=2$) 
			& ${1\over 2}\left(1+\sqrt{x^2\over x^2+1}\, \right)$ 
			&   
			${1\over 2\ell^2} 
			\left\{
			\ell^2 
			+ 
			{1\over\ell}
				\sqrt{\ell^2\over 2 + \ell^2} 
				\left[
				2\ell + \ell^3 - 2\sqrt{2 + \ell^2} {\rm arcsinh}({\ell\over \sqrt{2}})
				\right]
			\right\}$
			\\ [1ex]
			Cauchy 
			& ${1\over\pi}\arctan(x)+{1\over 2}$  &  
			${1\over 2\pi\ell^2}\big[\ell(\ell\pi-2) + 2(1 + \ell^2) \arctan(\ell)\big]$
			\\ [1ex]   
			Laplace 
			& ${1\over 2}+{1\over 2}{\rm sgn}(x)[1-\exp(-\vert x\vert)]$   & 
			${1\over \ell^2} \exp(-\ell) \big[1 + \ell + \exp(\ell) ( \ell^2-1)\big]$
					\\ [1ex]   
			Logistic 
					& ${1\over 1+\exp(-x)}$  & 
					${2\over\ell^2}
					\left\{
					{\pi^2\over 12} + \ell\log[1 + \exp(\ell)] + {\rm Li}_2[-\exp(\ell)]
					\right\}$
			\\[3ex]  	
			\hline	
		\end{tabular}
	}
	\label{table:1} 
\end{table}
\noindent
In Table \ref{table:1}, $\Phi$ is the CDF of a standard normal distribution,
${\rm Li}_2[z]=-\int_0^z \log(1-x)/x {\rm d}x$ is the polylogarithm function  of order $2$ and ${\rm erf}(z)=(2/\sqrt{\pi}) \int_0^z \exp(-x^2){\rm d}x$ is the error function.

\begin{proposition}\label{prop-1}
We further have
\begin{align*}
    \lim_{\ell\longrightarrow 0^+}H(\ell) = \lim_{\ell\longrightarrow 0^+} {\sigma^2\over \ell^2}={1\over 3}
\quad
\text{and}
\quad 
\lim_{\ell\longrightarrow\infty}H(\ell) = \lim_{\ell\longrightarrow \infty} {\sigma^2\over \ell^2}= 0.
\end{align*}
\end{proposition}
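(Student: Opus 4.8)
The plan is to deduce both limits directly from results already established, rather than computing with $C(\ell)$ explicitly. For the first limit, recall that Corollary \ref{corollary-0} gives
\[
\lim_{\frac{b-a}{2}\to 0^+}\frac{\sigma^2+(\mu-a)^2}{(b-a)^2}=\frac13.
\]
In the symmetric setting of Section \ref{sec:2.2} we have $a=-\ell$, $b=\ell$, so $b-a=2\ell$, and moreover $\mu=\mathbb{E}(X)=0$ because $G$ is a skewing function (as noted in the proof of Proposition \ref{prop-var-ell}). Hence $(\mu-a)^2=\ell^2$ and $(b-a)^2=4\ell^2$, so the corollary reads
\[
\lim_{\ell\to 0^+}\frac{\sigma^2+\ell^2}{4\ell^2}=\frac13,
\]
i.e.\ $\lim_{\ell\to 0^+}\bigl(\frac{\sigma^2}{4\ell^2}+\frac14\bigr)=\frac13$, which rearranges to $\lim_{\ell\to 0^+}\frac{\sigma^2}{\ell^2}=\frac43-1=\frac13$. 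Since $H(\ell)=\sigma^2/\ell^2$ by Proposition \ref{prop-var-ell}, this is exactly the first claim.

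For the second limit, the same substitution into the second half of Corollary \ref{corollary-0} gives $\lim_{\frac{b-a}{2}\to\infty}\frac{\sigma^2}{(b-a)^2}=0$, i.e.\ $\lim_{\ell\to\infty}\frac{\sigma^2}{4\ell^2}=0$, hence $\lim_{\ell\to\infty}\frac{\sigma^2}{\ell^2}=0$; again using $H(\ell)=\sigma^2/\ell^2$ finishes the proof. As an alternative (and as a sanity check) one can argue from Proposition \ref{prop-var-ell} and the sandwich $\tfrac12\le C(\ell)\le G(\ell)$ directly: writing $H(\ell)=1-\frac{2C(\ell)-1}{2G(\ell)-1}$, the bound $C(\ell)\le G(\ell)$ forces $H(\ell)\le 1 - \frac{2G(\ell)-1}{2G(\ell)-1}\cdot 0$... more precisely one needs the finer asymptotics $C(\ell)\to 1/2$ and $G(\ell)\to 1$ together with L'Hôpital-type expansions near $\ell=0$, which is exactly the computation Corollary \ref{corollary-0} already packaged for us.

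The only mild subtlety — and the one place to be careful — is the bookkeeping of the affine change of variables $\frac{X-a}{b-a}$ versus the centered variable in the symmetric case, i.e.\ making sure that "$\frac{b-a}{2}\to 0^+$" in Corollary \ref{corollary-0} corresponds precisely to "$\ell\to 0^+$" here and that the term $(\mu-a)^2$ really contributes the extra $\ell^2$; once $\mu=0$ and $a=-\ell$ are substituted this is immediate. No genuine obstacle remains: both statements are corollaries of Corollary \ref{corollary-0} with $(a,b)=(-\ell,\ell)$ together with the identity $H(\ell)=\sigma^2/\ell^2$ from Proposition \ref{prop-var-ell}.
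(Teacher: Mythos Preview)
Your proof is correct and follows exactly the approach of the paper, which proves the proposition in one line by substituting $a=-\ell$, $b=\ell$, $\mu=0$ into Corollary~\ref{corollary-0}; you have simply written out the algebra of that substitution in full. The aside about recovering the limits directly from Proposition~\ref{prop-var-ell} and the bounds on $C(\ell)$ is incomplete as written (the displayed inequality trails off), but since you correctly identify it as a tangential sanity check rather than the argument itself, it does not affect the validity of the proof.
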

\begin{proof}
Upon taking $a=-\ell$, $b=\ell$ and $\mu=0$ in Corollary \ref{corollary-0}, the required result follows.
\end{proof}

Next, we present two further examples in addition to Table \ref{table:1}.

\begin{example}
\label{ex:TCauchy}
Let $X$ be a truncated symmetric standard Cauchy distribution with density function $f(x) = \left(2 \arctan{\ell} \right)^{-1} \cdot \left(1+x^2\right)^{-1}$ if $|x| < \ell$, with $ \ell \leqslant 1$, and $f(x)=0$, if $|y| > \ell$. As its variance is $\sigma^2 = \ell/\arctan\ell - 1$ \citep[see][]{Johnson.Kotz.1970}, we obtain
       \begin{align*}
       H(\ell) = \frac{1}{\ell \arctan\ell} - \frac{1}{\ell^2}.
       \end{align*}
\end{example}

\begin{example}
\label{ex:TNormal}
Let $X$ be a symmetrically truncated standard Gaussian distribution with density function $f(x) = \{\sqrt{2\pi} [1-2\Phi(-\ell)]\}^{-1}\exp({-x^2/2})$, if $|x| < \ell$, where $ \ell \leqslant 1$ and $\Phi$ is the standard Gaussian cumulative distribution function, and $f(x)=0$, if $|x| > \ell$ \citep[see][]{Johnson.Kotz.1970}. As its variance can be expressed as $\sigma^2 = 1 - 2 \ell f(\ell)$, we find
       \begin{align*}
       H(\ell) = \frac{1}{\ell^2} - \frac{2}{\ell^3 f(\ell)}.
       \end{align*}
\end{example}

\noindent 
In both these examples, we observe $H(\ell) \rightarrow 1/3$ as $\ell \downarrow 0$ and $H(\ell) \uparrow 0$ as $\ell \uparrow \infty$, as stated in Proposition \ref{prop-1}. However, 
       \begin{align*}
       \ell H(\ell) = \frac{\sigma^2}{\ell}
       \end{align*}
behaves differently as $\uparrow \infty$. In the first example,  $\ell H(\ell)  \rightarrow  2/\pi$, but in the second example, we find  $\ell H(\ell)  \rightarrow  0$. 

\section{Illustration with financial data} 
\label{sec:3}
\noindent We illustrate the results developed here with intraday spot exchange rate data of four currencies against the US dollar transacted on the foreign exchange (Forex) market. There are 16 million tick-by-tick returns of bid prices provided by Tick Data, LLC (Table \ref{table:2}). Following the discussion regarding the truncated nature of the past \citep{Matsushita2020, Matsushita2023}, we consider the symmetric case here. 

For each currency, let $\{X_{ij}\}$ be the $j$th return observed on day $i = 1,\ldots, d$ (Table \ref{table:2}). Taking the daily sample standard deviation $s_i$ and the maximum daily absolute return $\ell_i = \max_{1\leq j \leq n_i}\{X_{ij}\}$, where $n_i$ denotes the sample size on day $i$ with $n = \sum_{i=1}^{d} n_i$, Figure \ref{figure:1} depicts the daily sample ratios $\{s_i/\ell_i\}$ in the form of dots.

Now, consider the general sequence ignoring days as $\{X_{t}\}$, where $t = 1,\ldots, n$. Letting 
$\ell^* = \max_{1\leq t \leq n}\{X_{t}\}$, we generated a grid of 1,000 truncation points, 
$\{\ell: \ell = m \ell^*/1000, \text{where } m = 1,\ldots, 1000\}$. For each $\ell$ over this grid, we obtained the sample standard deviation of the conditional (truncated) data $\{X_{t}| |X_{t}| \leq \ell\}$.
In this way, we empirically find the form of the ratio $\sigma/ \ell$ for the returns of each currency (Figure \ref{figure:1}). Then, Proposition \ref{prop-var-ell} provides a feasible and practical way of describing the relationship between the variance and the cutoff $\ell$. For small $\ell$, \cite{Matsushita2020, Matsushita2023} proposed the power law $\sigma^\beta/ \ell \approx \zeta$ from a second-order approximation, where $\beta > 0$ and $\zeta$ are real constants. So, we may approximate $\sigma/\ell$ as
\begin{align*}
\frac{\sigma}{\ell} \approx \zeta^{1/\beta} \ell^{- 1 + 1/\beta}.
\end{align*}
Figure \ref{figure:2} depicts the log-log plots of this approximated result, and shows the validity of such a power law approximation for $\sigma/\ell$.

\begin{table}[H]
\caption{Intraday spot exchange rate data description.}
\label{table:2}
\centering
\resizebox{\textwidth}{!}{
\begin{tabular}{l l l r r c c} 
 \hline
 Country & Currency & Code & \multicolumn{1}{c}{Period} & Number of days ($d$)& Data points ($n$) \rule{0pt}{2.6ex} \\ [0.5ex] 
 \hline
Britain        & British pound      & GBP  & 31 Aug 08 $-$ 12 Jun 15 & 2,116 & 2,754,615 \\
Canada         & Canadian dollar    & CAD  & 12 Jun 00 $-$ 12 Jun 15 & 4,419 & 3,931,202 \\ 
Japan          & Japanese yen       & JPY  & 30 May 00 $-$ 12 Jun 15 & 4,598 & 4,804,463 \\
Switzerland    & Swiss franc        & CHF  & 30 May 00 $-$ 12 Jun 15 & 4,587 & 4,838,100 \\
& & & Total&  15,720   &  16,328,380\\
\hline
Source: Tick Data, LLC. & & & &
\end{tabular}}
\end{table}

\begin{figure}[H]
\centering
\makebox{\includegraphics[width=6 cm, height=5 cm]{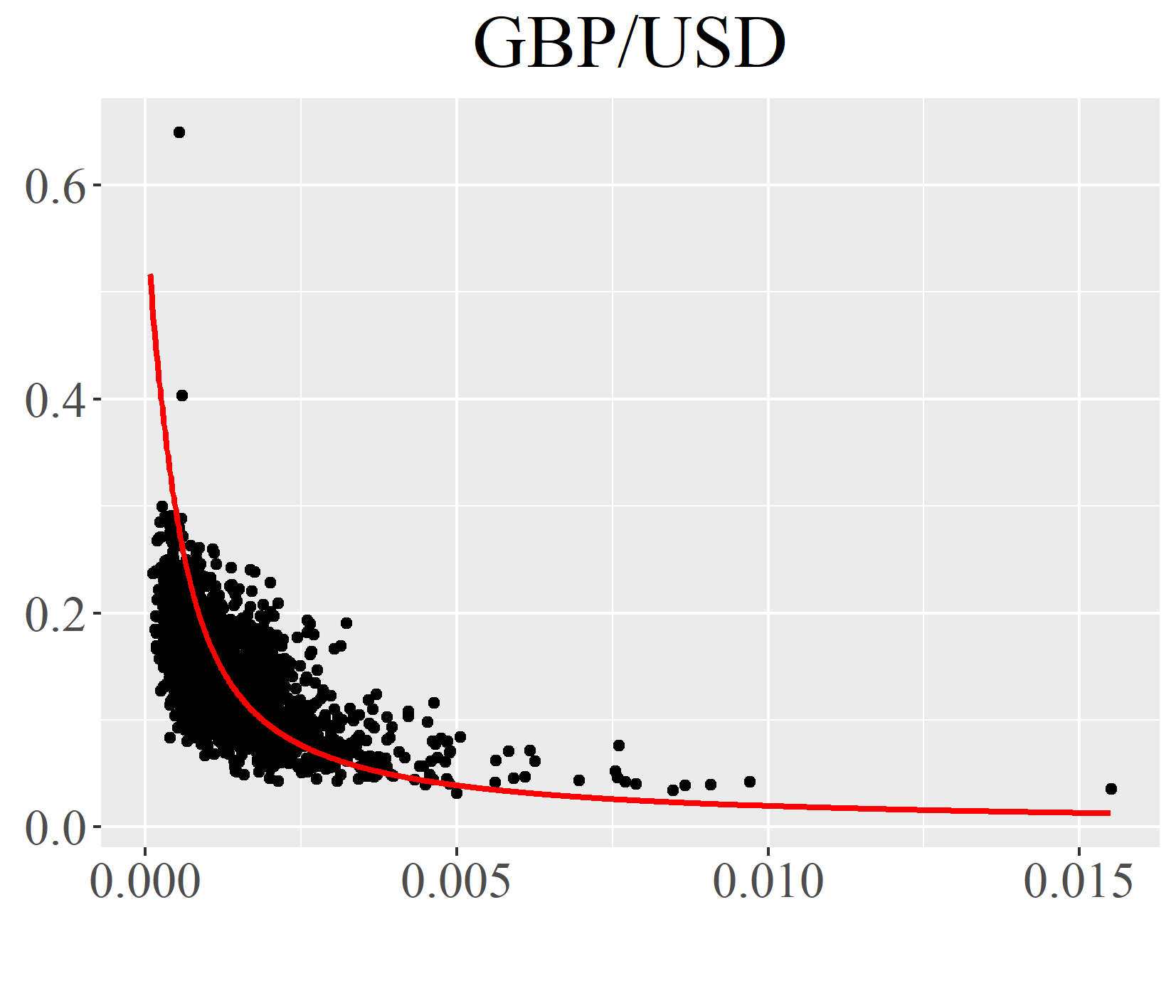}}
\makebox{\includegraphics[width=6 cm, height=5 cm]{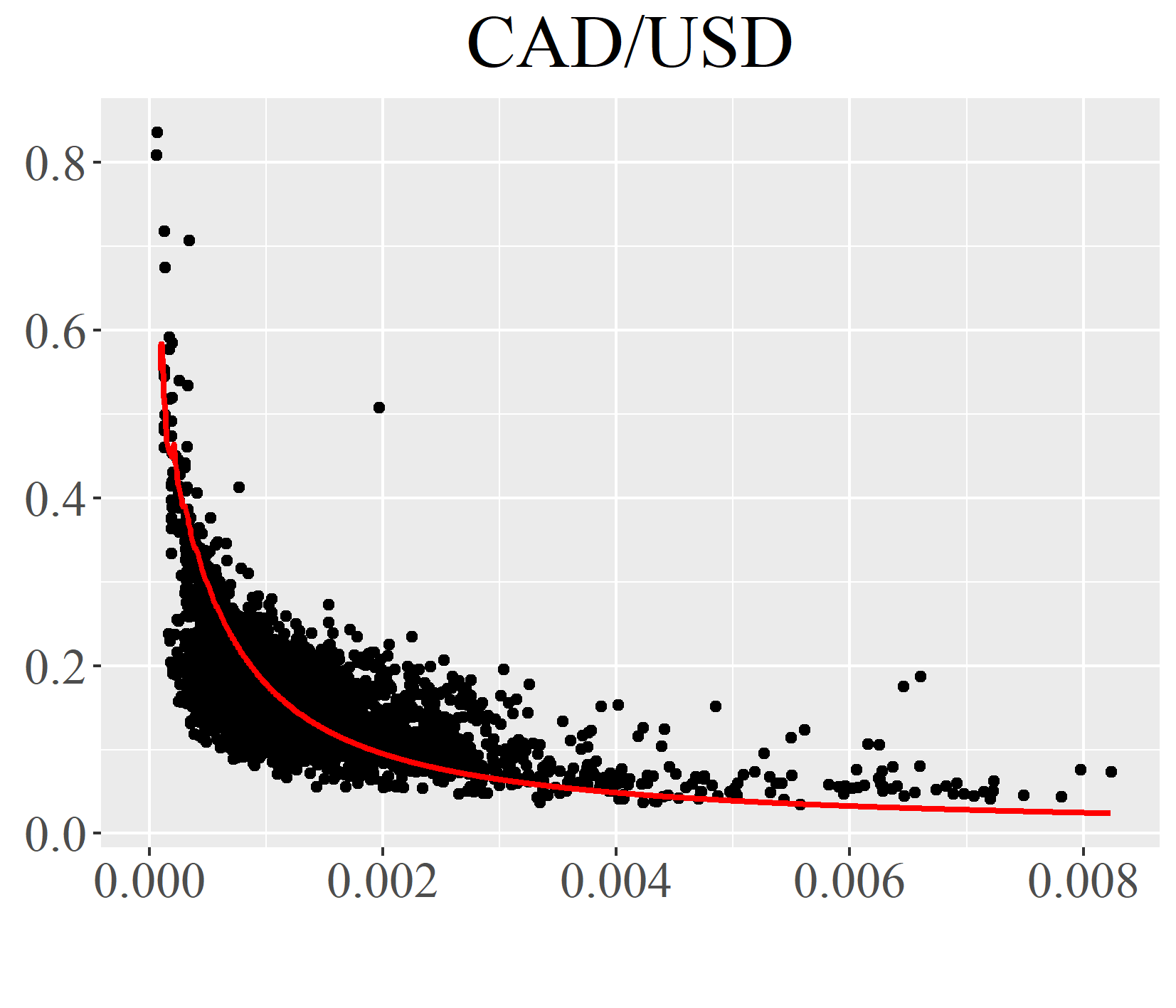}}\\
\makebox{\includegraphics[width=6 cm, height=5 cm]{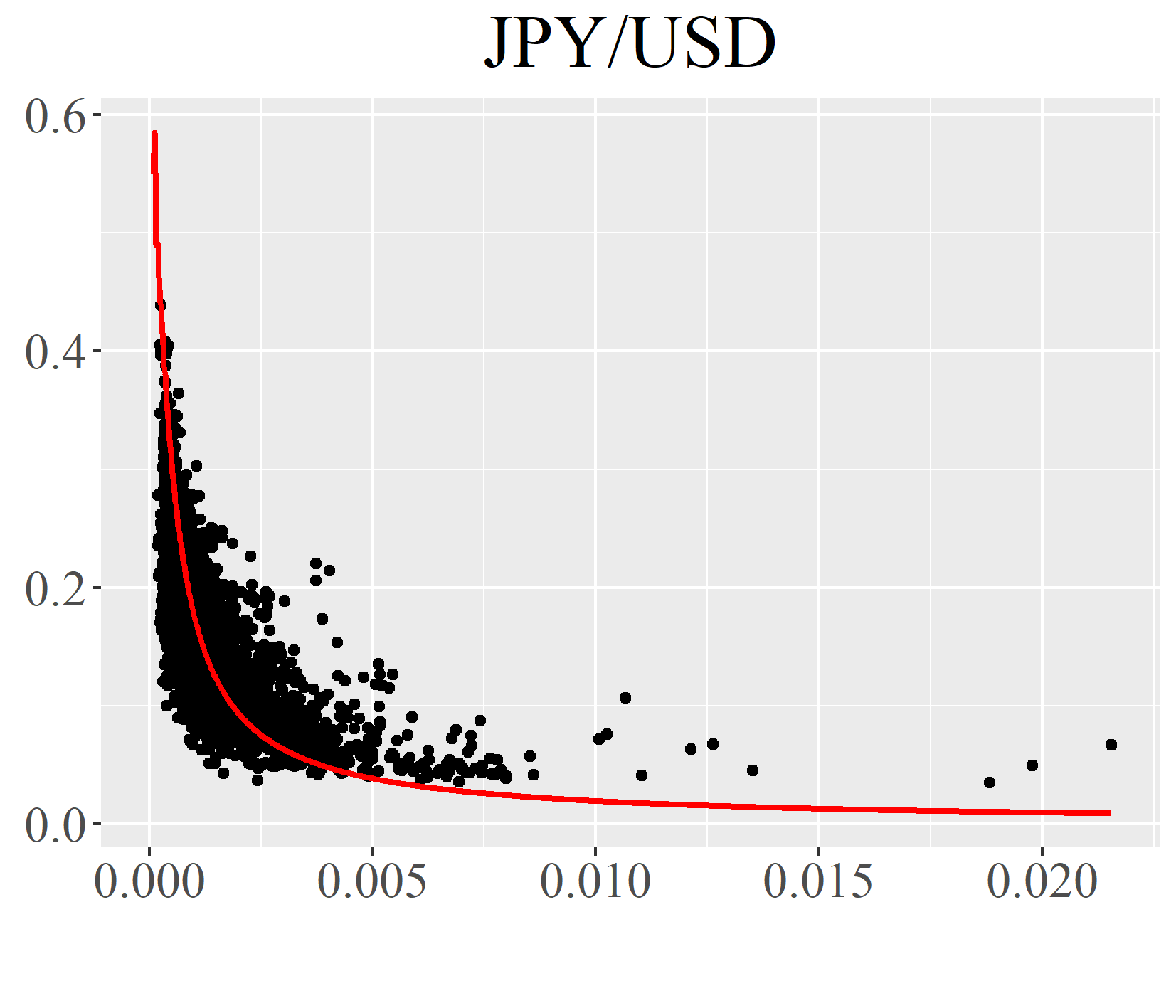}}
\makebox{\includegraphics[width=6 cm, height=5 cm]{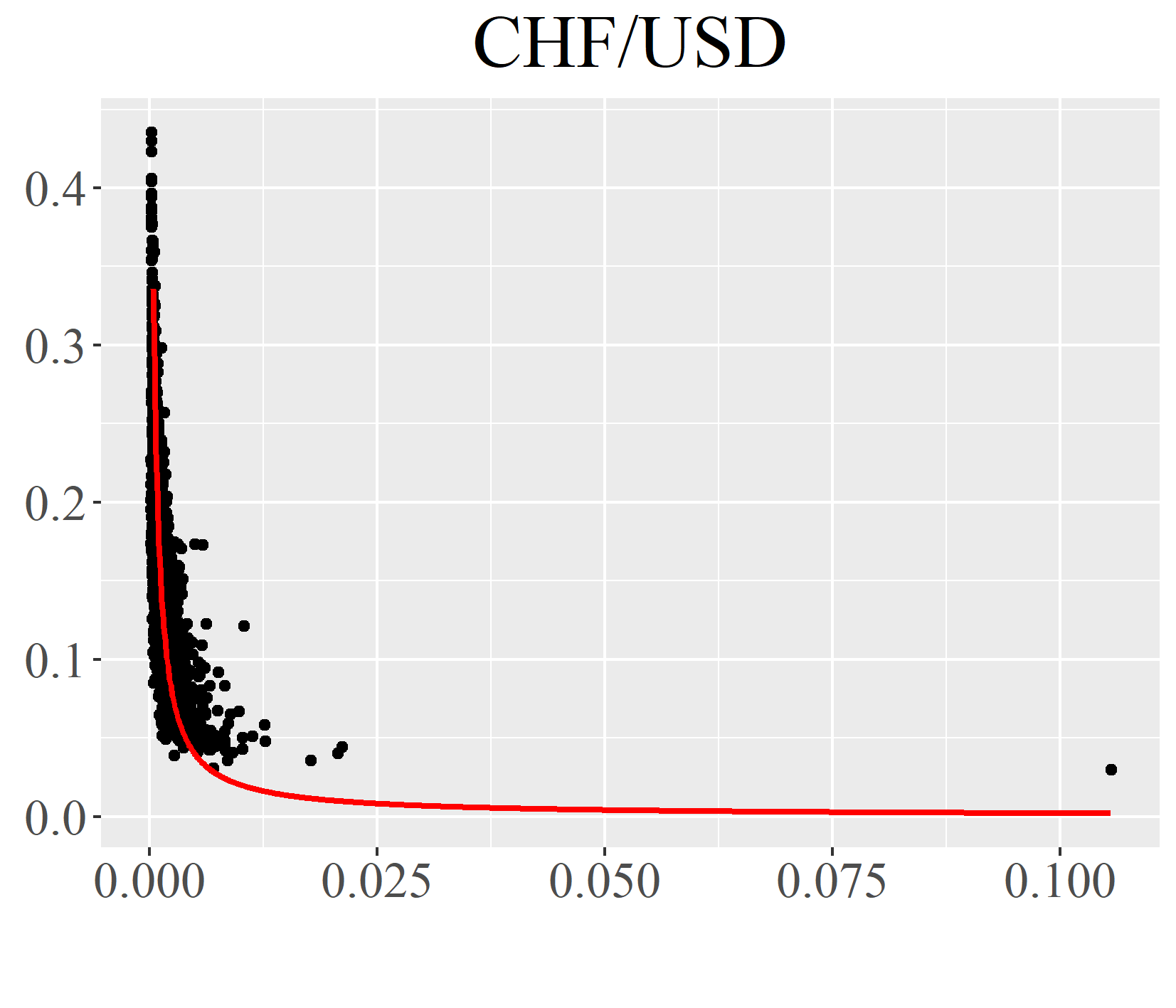}}
\caption{\label{figure:1}\footnotesize Display of $\ell$ \emph{versus} the sample ratio $\sigma/ \ell$ (lines) from data described in Table \ref{table:2}, where dots represent daily values.}
\end{figure}

\begin{figure}[H]
\centering
\makebox{\includegraphics[width=6 cm, height=5 cm]{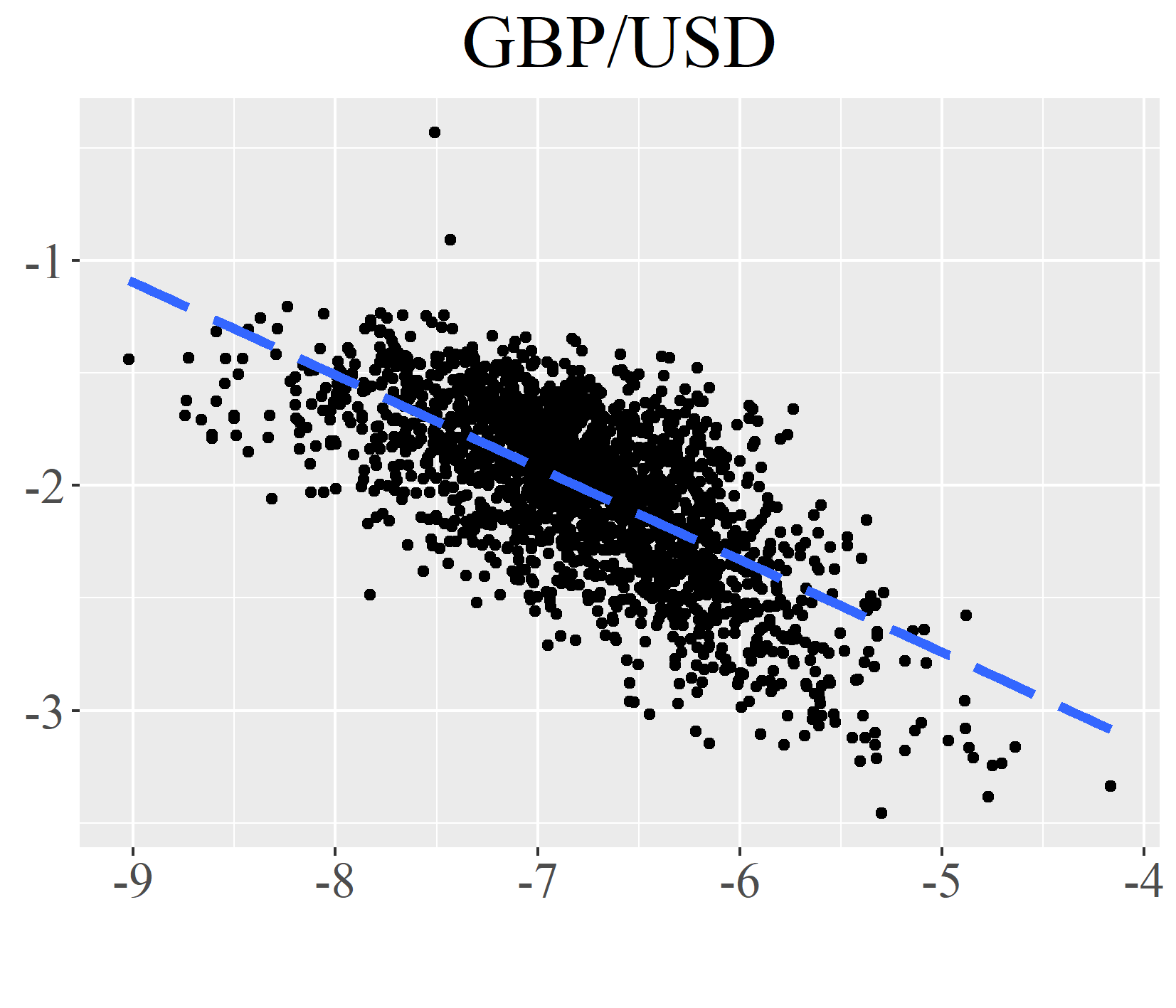}}
\makebox{\includegraphics[width=6 cm, height=5 cm]{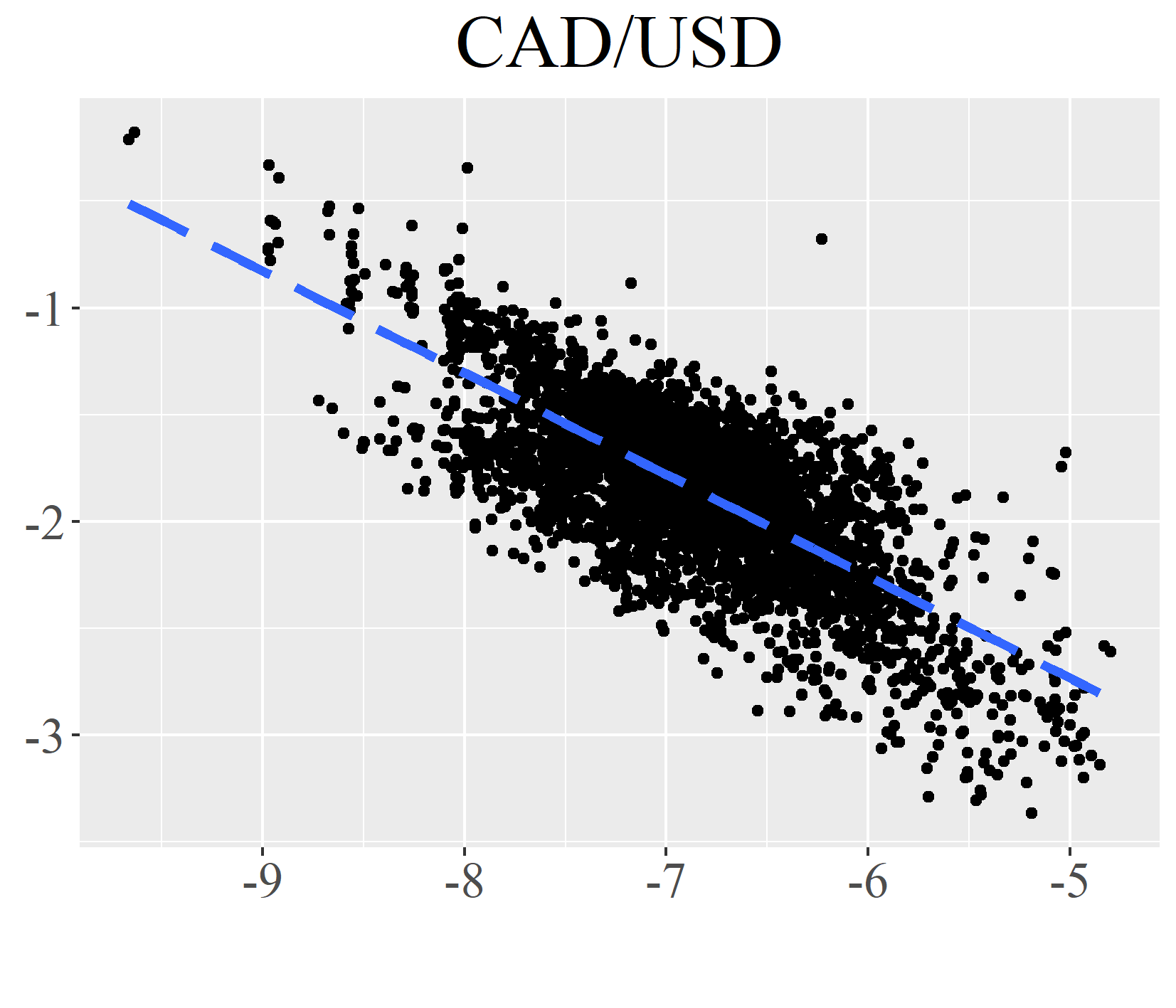}}\\
\makebox{\includegraphics[width=6 cm, height=5 cm]{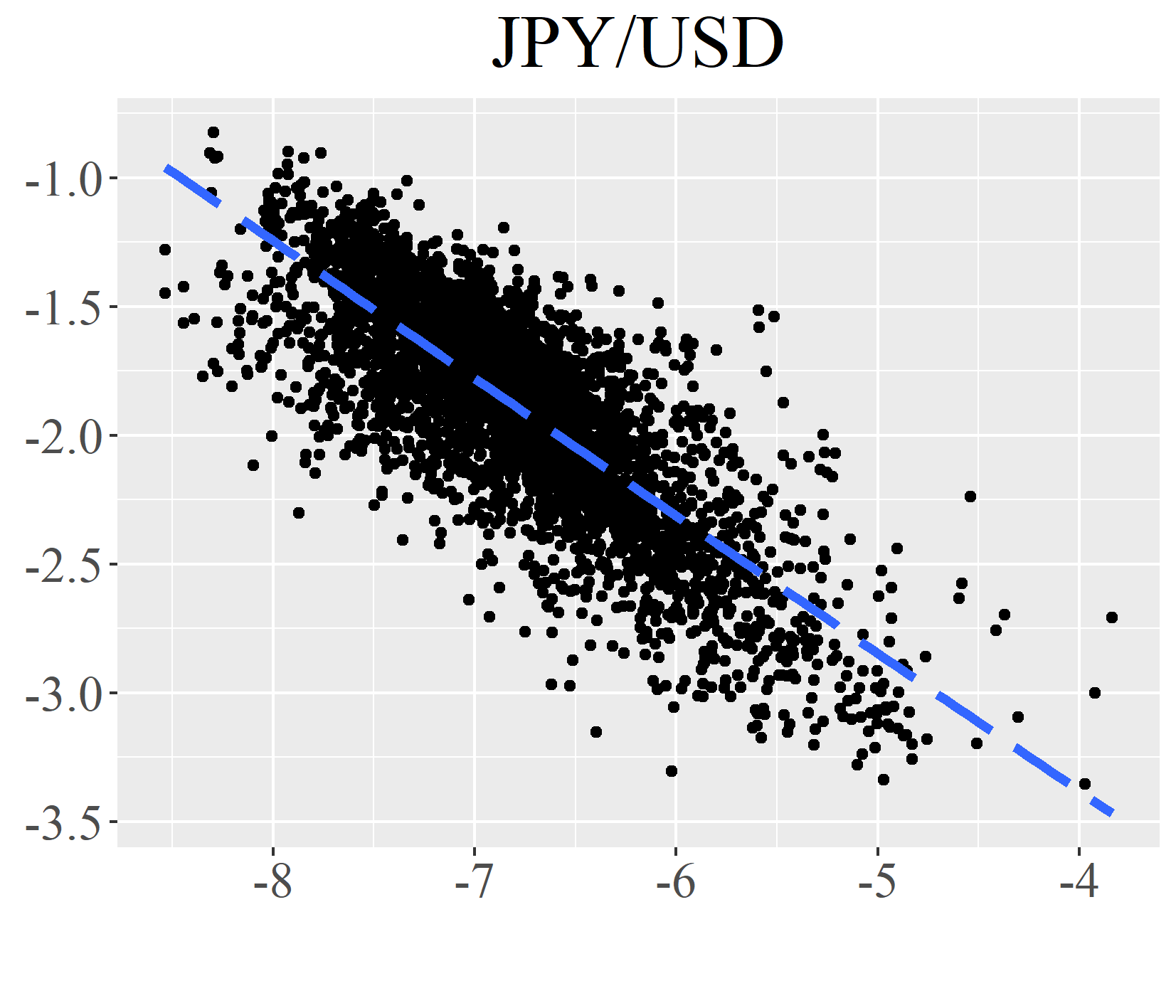}}
\makebox{\includegraphics[width=6 cm, height=5 cm]{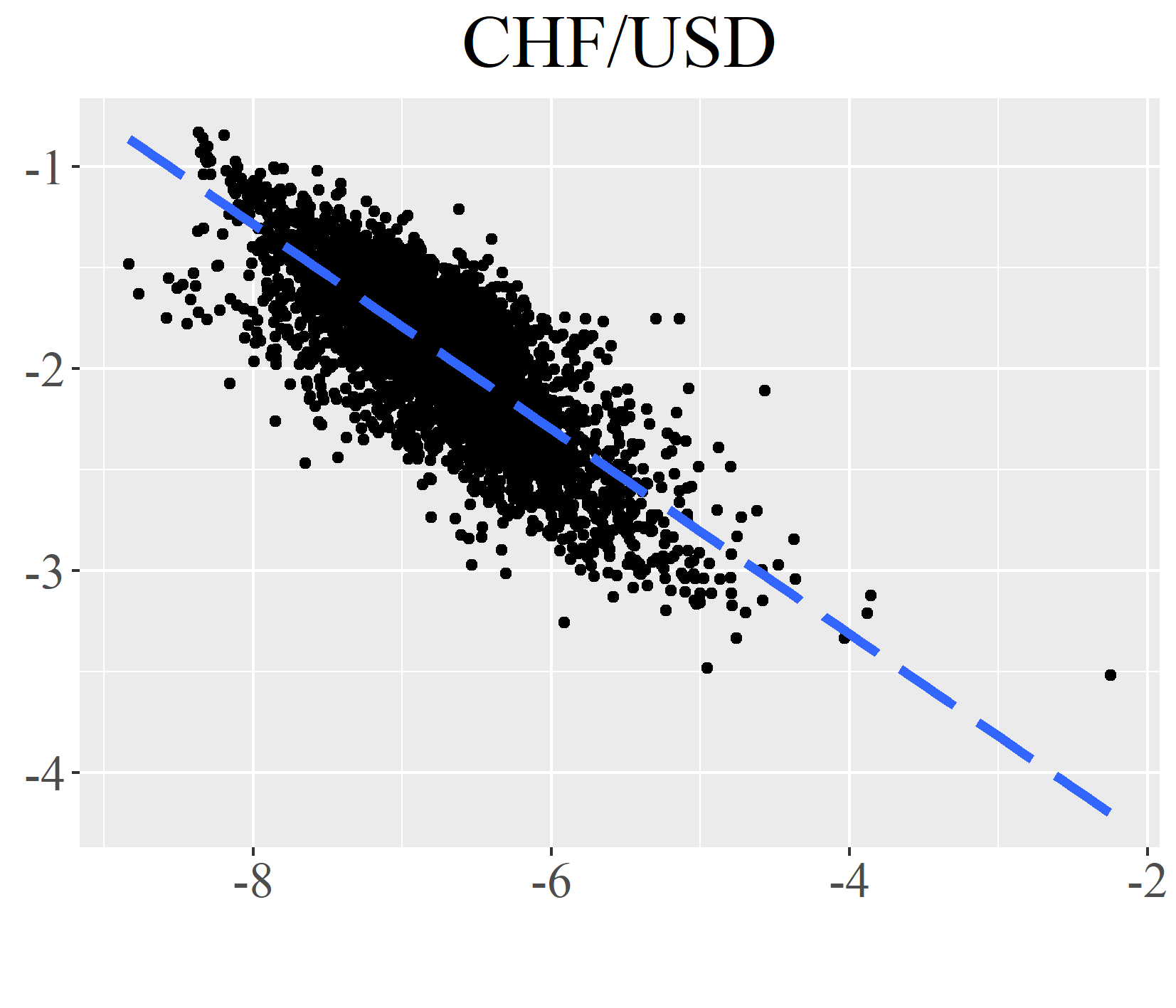}}
\caption{\label{figure:2}\footnotesize Display of $\ln\ell$ \emph{versus} $\ln \tfrac{\sigma}{\ell}$ (dashed lines) from data described in Table \ref{table:2}, where dots represent daily values.}
\end{figure}

\section{Concluding remarks} \label{sec:4}
\noindent

In this work, we have presented a general approach for understanding the relationship between the variance and the range of a general family of truncated distributions based on skewing functions. We have established a closed-form expression for its moments and their asymptotic behavior as the support's semi-range tends to zero and $\infty$.

As discussed previously by \cite{Matsushita2020, Matsushita2023}, if the truncated nature arises naturally from the past, the function relating truncation length and standard deviation may assist in connecting the bounded past and unbounded future data. For this reason, we expect our results to be useful in many practical situations.

%

\paragraph{Acknowledgements}
Roberto Vila gratefully acknowledges financial support from CNPq, CAPES, and FAP-DF, Brazil.
Raul Matsushita acknowledges financial support from CNPq, CAPES, FAP-DF, and DPI/DPG/UnB, Brazil. 
\paragraph{Disclosure statement}
There are no conflicts of interest to disclose.


\normalsize


\end{document}